\newtheorem{thm}{Theorem}[section]
\newtheorem{defn}[thm]{Definition}
\newtheorem{rem}[thm]{Remark}
\title{\bf HYPERGRAPHS IN THE CHARACTERIZATION OF REGULAR-VINE COPULA STRUCTURES}
\author{\\ Edith KOV\'{A}CS\\ 
Budapest College of Management \\ and Tam\'{a}s SZ\'{A}NTAI\\ Budapest University of Technology and Economics}
\date{}
\begin{document}

\thispagestyle{plain}

\maketitle

\vspace{-10cm}

\begin{quotation}
\begin{center}
 Proceedings of the $XIII^{th}$ Conference on Mathematics and its Applications\\
University "Politehnica" of Timisoara\\
November, 1-3, 2012
\end{center}
\end{quotation}
\vspace{7.5cm}

\begin{abstract}
Vine copulas constitute a flexible way for modeling of dependences using only pair copulas as building blocks. The pair-copula constructions introduced by Joe (1997) are able to encode more types of dependences in the same time since they can be expressed as a product of different types of bi-variate copulas. The Regular-vine structures (R-vines), as pair copulas corresponding to a sequence of trees, have been introduced by Bedford and Cooke (2001, 2002) and further explored by Kurowicka and Cooke (2006). The complexity of these models strongly increases in larger dimensions.  Therefore the so called truncated R-vines were introduced in Brechmann et al. (2012). 
In this paper we express the Regular-vines using a special type of hypergraphs, which encodes the conditional independences. 
\footnote{Mathematical Subject
Classification(2008):{\it 60C05}, {\it 62H05}

Keywords and phrases:{\it Copula, conditional independences, Regular-vine, truncated vine, cherry-tree copula} }
\end{abstract}
\bigskip

\section{Introduction}
\label{sec:sec1}
\smallskip

Copulas in general are known to be useful tool for modeling
multivariate probability distributions since they serve as a link between univariate marginals.
In this paper we show how conditional independences can be utilized in the expression of multivariate copulas.
Regarding to this we prove a theorem which links to a multivariate probability distribution assigned to a junction tree the so called junction tree copula. A hard practical problem is finding those lower dimensional copulas which are involved in the expression of the junction tree copula.

The paper Aas et al. (2009) calls the attention on the fact that ''conditional independence may reduce the number of the
pair-copula decompositions and hence simplify the construction''. 
In this paper the importance of choosing a good
factorisation which takes advantage from the conditional independence
relations between the random variables is pointed out. 
In the present paper we give a method for
findig that pair copula construction which exploits the conditional
independences between the variables.

\section{Preliminaries}
\label{sec:sec2}
\smallskip

In this section we introduce some concepts used in graph theory and
probability theory that we need throughout the paper and present how these
can be linked to each other. For a good overview see Lauritzen and Spiegelhalter (1988).

We first present the acyclic hypergraphs and junction trees. 
Then we introduce the cherry trees as a special type of junction trees.
We finish this part with the multivariate joint
probability distribution associated to junction trees.

Let $V=\left\{ 1,\ldots ,d\right\} $ be a set of vertices and $\Gamma $ a
set of subsets of $V$ called {\it set of hyperedges}. A \textit{hypergraph}
consists of a set $V$ of vertices and a set $\Gamma $ of hyperedges. We
denote a hyperedge by $K_i$, where $K_i$ is a subset of $V$. If two vertices
are in the same hyperedge they are connected, which means, the hyperedge of a
hyperhraph is a complete graph on the set of vertices contained in it.

The \textit{acyclic}{\normalsize \ }\textit{hypergraph} is a special type of
hypergraph which fulfills the following requirements:

\begin{itemize}
\item  Neither of the hyperedges of $\Gamma $ is a subset of another hyperedge.

\item  There exists a numbering of edges for which the \textit{running
intersection property} is fullfiled: $\forall j\geq 2\quad \ \exists \ i<j:\
K_i\supset K_j\cap \left( K_1\cup \ldots \cup K_{j-1}\right) $. (Other
formulation is that for all hyperedges $K_i$ and $K_j$ with $i<j-1$,
$K_i \cap K_j \subset K_s \ \mbox{for all} \ s, i<s<j$.)
\end{itemize}

Let $S_j=K_j\cap \left( K_1\cup \ldots \cup K_{j-1}\right) $, for $j>1$ and $%
S_1=\phi $. Let $R_j=K_j\backslash S_j$. We say that $S_j$\textit{separates} 
$R_j$ from $\left( K_1\cup \ldots \cup K_{j-1}\right) \backslash S_j$, and
call $S_j$ separator set or shortly separator.

Now we link these concepts to the terminology of junction trees.

The {\it junction tree} is a special tree stucture which is equivalent to the
connected acyclic hypergraphs (Lauritzen and Spiegelhalter (1988)). The nodes of the tree correspond
to the hyperedges of the connected acyclic hypergraph and are called clusters, the edges of the tree
correspond to the separator sets and called separators. The set of all
clusters is denoted by $\Gamma$, the set of all separators is denoted by %
$\mathcal{S}$. A junction tree $(V,\Gamma, S)$ is defined by the set of vertices $V$, the set of nodes $\Gamma$ called also set of clusters, and the set of separators $S$. The junction tree with the largest cluster containing $k$
variables is called \textit{k-width junction tree}. 

The concept of \textit{junction tree probability distribution} is related to
the junction tree graph and to the global Markov property of the graph. A
junction tree probability distribution is defined as a product and fraction of
marginal probability distributions as follows:

\begin{equation}\label{eq:eq1}
P\left( \mathbf{X}\right) =\dfrac{\prod\limits_{C\in \Gamma}P\left(
\mathbf{X}_C\right) }{\prod\limits_{S\in \mathcal{S}}\left[ P\left( \mathbf{X}_S\right)
\right] ^{\nu _S-1}},
\end{equation}
where $\Gamma$ is the set of clusters of the
junction tree, $\mathcal S$ is the set of separators, $\nu_S$ is the number of those clusters
which contain the separator $S$. We emphasize here that the equalities written as
$P(\mathbf{X})=f(P(\mathbf{X}_K), K\in \Gamma)$, where $f: \Omega_{\mathbf{X}}\rightarrow R$
hold for any possible realization of $\mathbf{X}$.

In Buksz\'{a}r and Pr\'{e}kopa (2001) and Buksz\'{a}r and Sz\'{a}ntai (2002) there were introduced the so called $t$-cherry tree graph structures. Since these can be regarded as a special type of junction tree we can give now the following definition. In this paper we will call this structure simply cherry tree as this does not cause any confusion.

\begin{defn}
\label{def:def2.2} 
We call $k$-th order cherry tree the junction tree with all clusters of size $k$ and all separators of size $k-1$.
\end{defn}

We will denote by $\mathcal{C}_{\mbox{ch}}$ and $\mathcal{S}_{\mbox{ch}}$ subsets of $V$, the
set of clusters and separators of the cherry junction tree.

\begin{defn}
\label{def:def2.3}
(Sz\'{a}ntai, Kov\'{a}cs (2012))
The probability distribution assigned to a cherry tree is called {\it cherry tree probability distribution}.
\end{defn}
The marginal density functions involved in 
Formula (\ref{eq:eq1}) are marginal probability distributions of $P\left( \mathbf{X}\right) $.

We summarize here some of our results in Sz\'antai and Kov\'acs (2008) which will be used later, in Section 5.
Let $P(\mathbf{X})$  be a joint probability distribution which is approximated by a $k$-width junction tree pd.

\begin{thm}\label{theo:theo2.4}
(Theorem 4 in Sz\'antai and Kov\'acs (2008)): A $k$-width junction tree pd can be transformed into a $k$-th order cherry tree pd which gives at least as good approximating pd as the $k$-width junction tree did.
\end{thm}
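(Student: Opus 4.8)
The plan is to quantify the quality of an approximation by the Kullback--Leibler divergence $D(P\,\|\,P_a)$ between the true distribution $P(\mathbf{X})$ and the junction tree approximation $P_a(\mathbf{X})$ given by Formula~(\ref{eq:eq1}). The first step is to note that, once the true marginals $P(\mathbf{X}_C)$ and $P(\mathbf{X}_S)$ are inserted into (\ref{eq:eq1}), a direct computation of the cross-entropy $-E_P[\log P_a]$ yields
\[
D(P\,\|\,P_a)=\Big[\sum_{C\in\Gamma}H(\mathbf{X}_C)-\sum_{S\in\mathcal{S}}(\nu_S-1)\,H(\mathbf{X}_S)\Big]-H(\mathbf{X}),
\]
where $H$ denotes the Shannon entropy (equivalently, $P_a$ is the information projection of $P$ onto the Markov family of the junction tree). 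Since $H(\mathbf{X})$ does not depend on the chosen structure, minimizing the divergence amounts to minimizing the bracketed quantity, so it suffices to transform the $k$-width junction tree into a $k$-th order cherry tree without increasing this bracket.

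The second step is to realize the transformation as a finite sequence of one-vertex augmentations. Whenever a cluster $C_j=S_j\cup R_j$ has fewer than $k$ vertices, I would pick an adjacent cluster $C_i$ meeting it in the separator $S_j=C_i\cap C_j$ together with a vertex $v\in C_i\setminus C_j$ (which exists because no cluster is contained in another), and replace $C_j$ by $C_j'=C_j\cup\{v\}$; the separator then grows to $S_j'=C_i\cap C_j'=S_j\cup\{v\}$. Because $v$ already belongs to the neighbour $C_i$ and $C_i,C_j$ are adjacent, the clusters containing $v$ still form a subtree after the augmentation, so the running intersection property is preserved and the outcome is again a valid junction tree.

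The key step is the per-step monotonicity. For a single augmentation only the terms attached to $C_j$ and $S_j$ change, and (writing $R_j=C_j\setminus S_j$, with $v\notin R_j$) their net contribution to the bracket is
\[
\Delta=\big[H(\mathbf{X}_{C_j'})-H(\mathbf{X}_{S_j'})\big]-\big[H(\mathbf{X}_{C_j})-H(\mathbf{X}_{S_j})\big]
      =H(\mathbf{X}_{R_j}\mid \mathbf{X}_{S_j},X_v)-H(\mathbf{X}_{R_j}\mid \mathbf{X}_{S_j}).
\]
This is non-positive because conditioning on the extra variable $X_v$ cannot increase entropy. Hence every augmentation leaves the divergence unchanged or strictly lowers it, and when no undersized cluster remains the resulting structure is a $k$-th order cherry tree that approximates $P$ at least as well as the original $k$-width junction tree did.

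I expect the main obstacle to lie not in this information-theoretic inequality but in the combinatorial bookkeeping. One must verify that the augmentations can always be arranged so that the procedure terminates in a structure whose clusters all have size exactly $k$ and whose separators all have size exactly $k-1$ (enlarging a separator that is too small between two already-full clusters is the delicate case, and may require restructuring rather than mere absorption), that the running intersection property genuinely survives each step, and that when a separator set is shared by more than two clusters the multiplicity factor $\nu_S$ is handled correctly so that the change $\Delta$ still reduces to a difference of conditional entropies.
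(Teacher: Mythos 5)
First, a point of reference: the paper does not actually prove this statement. It is quoted from Sz\'antai and Kov\'acs (2008), and the text only remarks that Algorithm~2 of that reference performs the claimed transformation constructively. So your proposal can only be compared with that constructive algorithm. Your analytic framework is sound and is the same information-theoretic engine used there: measuring approximation quality by $D(P\,\|\,P_a)$, the identity $D(P\,\|\,P_a)=\sum_{C\in\Gamma}H(\mathbf{X}_C)-\sum_{S\in\mathcal{S}}(\nu_S-1)H(\mathbf{X}_S)-H(\mathbf{X})$ when the true marginals are plugged into the structure, and the computation that a one-vertex augmentation changes this functional by $H(\mathbf{X}_{R_j}\mid\mathbf{X}_{S_j},X_v)-H(\mathbf{X}_{R_j}\mid\mathbf{X}_{S_j})=-I(\mathbf{X}_{R_j};X_v\mid\mathbf{X}_{S_j})\le 0$ are all correct, as is your check that the running intersection property survives such an augmentation.

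However, there is a genuine gap, and it sits exactly where you flagged it in your closing paragraph. Your procedure triggers only ``whenever a cluster has fewer than $k$ vertices,'' so it terminates as soon as every cluster has size $k$; but a $k$-width junction tree can have all clusters of size $k$ and still fail to be a $k$-th order cherry tree, because some separator has size smaller than $k-1$. Take $k=3$, $V=\{1,\dots,5\}$, clusters $\{1,2,3\}$ and $\{3,4,5\}$ with separator $\{3\}$: your procedure does nothing here, yet the structure is not a third-order cherry tree. This is not peripheral bookkeeping --- undersized separators are the generic way a $k$-width junction tree differs from a $k$-th order cherry tree, so this case is the crux of the theorem. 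Closing it requires inserting bridging clusters rather than absorbing vertices: in the example one passes to clusters $\{1,2,3\},\{2,3,4\},\{3,4,5\}$ with separators $\{2,3\},\{3,4\}$, and the change in the entropy functional is $H(X_2,X_3,X_4)-H(X_2,X_3)-H(X_3,X_4)+H(X_3)=-I(X_2;X_4\mid X_3)\le 0$. In general, between two size-$k$ clusters whose separator has size $s<k-1$ one must interpolate a chain of size-$k$ clusters obtained by swapping one vertex at a time, verify the running intersection property for the enlarged family, and show that the telescoping sum of conditional mutual informations is non-positive. That construction is precisely the content of Algorithm~2 in Sz\'antai and Kov\'acs (2008); without it, your argument shows only that undersized clusters can be fattened, which does not yet prove the theorem.
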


In Sz\'antai and Kov\'acs (2008) there is given a constructive algorithm, called Algorithm 2 which performs the transformation claimed in the Theorem \ref{theo:theo2.4}.

\begin{thm}\label{theo:theo2.6}
(Theorem 7 in Sz\'antai and Kov\'acs (2008)): The $(k+1)$-th order cherry tree pd obtained by the constructive algorithm starting from the best approximating $k$-th order cherry tree pd, approximates at least as good $P(\mathbf{X})$ as the $k$-th order did.
\end{thm}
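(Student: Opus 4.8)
The plan is to take the approximation quality to be measured, as in Sz\'antai and Kov\'acs (2008), by the Kullback--Leibler information divergence $D(P\,\|\,P_a)$ between the true distribution $P(\mathbf{X})$ and the cherry tree probability distribution $P_a$ given by Formula (\ref{eq:eq1}), and to prove that this divergence does not increase under the passage from the $k$-th order to the constructed $(k+1)$-th order cherry tree. First I would substitute (\ref{eq:eq1}) into $D(P\,\|\,P_a)=-H(P)-E_P[\log P_a]$, where $H$ denotes Shannon entropy, and collect terms to obtain
\begin{equation*}
D(P\,\|\,P_a)=-H(P)+J(\mathcal{T}),\qquad J(\mathcal{T})=\sum_{C\in\Gamma}H(\mathbf{X}_C)-\sum_{S\in\mathcal{S}}(\nu_S-1)H(\mathbf{X}_S).
\end{equation*}
Because $-H(P)$ depends only on $P$, it suffices to show that the \emph{weight} $J$ of the constructed tree satisfies $J(\mathcal{T}_{k+1})\le J(\mathcal{T}_{k})$. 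Here I would use the elementary junction-tree fact that the coefficient $\nu_S-1$ is exactly the number of tree edges carrying the separator $S$, so that $\sum_{S}(\nu_S-1)H(\mathbf{X}_S)=\sum_{e}H(\mathbf{X}_{S_e})$ is a sum over the edges of the cherry tree.

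The second step is to read off from the constructive algorithm how $\mathcal{T}_{k+1}$ is built from $\mathcal{T}_{k}$: each new cluster of size $k+1$ is the union $C_i\cup C_j$ of two old clusters adjacent across a separator $S_{ij}=C_i\cap C_j$ of size $k-1$, and each old cluster $C_j$ reappears in $\mathcal{T}_{k+1}$ as a separator of size $k$ whose multiplicity equals its degree $\deg(C_j)$ as a node of $\mathcal{T}_{k}$. A count confirms that a $k$-th order cherry tree on $d$ vertices has $d-k+1$ clusters and hence $d-k$ tree edges, which is exactly the number of size-$(k+1)$ clusters of a $(k+1)$-th order cherry tree, so the unions over edges supply precisely the right clusters. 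The key identity I would invoke is the entropy decomposition
\begin{equation*}
H(\mathbf{X}_{C_i\cup C_j})=H(\mathbf{X}_{C_i})+H(\mathbf{X}_{C_j})-H(\mathbf{X}_{S_{ij}})-I(\mathbf{X}_{C_i\setminus S_{ij}};\mathbf{X}_{C_j\setminus S_{ij}}\mid \mathbf{X}_{S_{ij}}),
\end{equation*}
in which the conditional mutual information term is non-negative.

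Summing this identity over all new clusters, i.e.\ over all edges $(i,j)$ of $\mathcal{T}_{k}$, and subtracting the new separator contribution $\sum_{j}(\deg(C_j)-1)H(\mathbf{X}_{C_j})$, I expect the cluster entropies to telescope: each $H(\mathbf{X}_{C_j})$ is produced $\deg(C_j)$ times by the unions incident to $C_j$ and removed $\deg(C_j)-1$ times by the separators, leaving a single copy, while the terms $H(\mathbf{X}_{S_{ij}})$ reassemble into the old edge sum $\sum_{e}H(\mathbf{X}_{S_e})$. This should yield the clean relation
\begin{equation*}
J(\mathcal{T}_{k+1})=J(\mathcal{T}_{k})-\sum_{(i,j)\in E(\mathcal{T}_k)}I(\mathbf{X}_{C_i\setminus S_{ij}};\mathbf{X}_{C_j\setminus S_{ij}}\mid \mathbf{X}_{S_{ij}})\ \le\ J(\mathcal{T}_{k}),
\end{equation*}
whence $D(P\,\|\,P_a^{(k+1)})\le D(P\,\|\,P_a^{(k)})$, which is the assertion. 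Note that optimality of the starting $k$-th order tree is not needed for this monotonicity; the inequality holds for the refinement of any $k$-th order cherry tree.

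The step I expect to be the main obstacle is not the inequality, which is immediate from the non-negativity of conditional mutual information, but the structural bookkeeping. One must verify that the family of unions $C_i\cup C_j$ indeed forms a connected acyclic hypergraph satisfying the running intersection property, the crucial point being that for two old edges meeting at a common node $C_j$ one has $(C_i\cup C_j)\cap(C_j\cup C_l)=C_j$; this follows from the running intersection property of $\mathcal{T}_k$, which forces $C_i\cap C_l\subseteq C_j$. One must also confirm that the separator multiplicities in $\mathcal{T}_{k+1}$ are precisely the node degrees of $\mathcal{T}_k$, since it is this exact match that makes the telescoping leave a single clean copy of each $H(\mathbf{X}_{C_j})$. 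Once the constructed object is certified to be a genuine $(k+1)$-th order cherry tree, the monotonicity of the divergence follows.
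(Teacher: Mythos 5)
This theorem is stated in the paper without proof: it is imported verbatim as Theorem 7 of Sz\'antai and Kov\'acs (2008), together with a pointer to the constructive algorithm given there, so there is no internal proof to compare your argument against, and it must be judged on its own. On its merits it is correct, and it is in the same spirit in which the present paper actually uses the result in Section \ref{sec:sec5}: approximation quality is Kullback--Leibler divergence, and the claim reduces to monotonicity of your entropy functional $J$. Your identity $D(P\,\|\,P_a)=-H(P)+J(\mathcal{T})$ is valid precisely because the junction tree pd (\ref{eq:eq1}) is built from the true marginals of $P$; the fact that for a cherry tree $\sum_S(\nu_S-1)H(\mathbf{X}_S)$ is a sum over tree edges is correct (the clusters containing a separator $S$ span a subtree with $\nu_S-1$ edges, each of whose edge separators has size $k-1=|S|$ and hence equals $S$); and the telescoping $J(\mathcal{T}_{k+1})=J(\mathcal{T}_k)-\sum_{(i,j)}I(\mathbf{X}_{C_i\setminus S_{ij}};\mathbf{X}_{C_j\setminus S_{ij}}\mid\mathbf{X}_{S_{ij}})$ is an exact computation, not merely heuristic. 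Your remark that optimality of the starting tree is never used is also right, and is in fact what the paper needs in Section \ref{sec:sec5}, where the theorem is applied to an exact rather than best-approximating $k$-th order representation.

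The one place where your proposal remains a sketch is the structural certification, and it needs slightly more than the pairwise identity you give. Verifying $(C_i\cup C_j)\cap(C_j\cup C_l)=C_j$ identifies the separators, but the running intersection property must be checked per vertex: for every $v\in V$, the new clusters containing $v$ must induce a connected subtree of $\mathcal{T}_{k+1}$. This holds, but only if the new tree is assembled so that, for each old cluster $C_j$, the new clusters arising from old edges incident to $C_j$ form a connected subgraph among themselves; one then also needs the observation that no union $C_a\cup C_b$ over an edge \emph{not} incident to $C_j$ can contain $C_j$ (if it did, the running intersection property of $\mathcal{T}_k$ would force $C_j\subseteq C_a$ or $C_j\subseteq C_b$, impossible for distinct clusters of equal size $k$). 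That observation is what makes your bookkeeping $\nu_{C_j}=\deg(C_j)$ exact rather than a lower bound, and the per-vertex connectivity then follows by walking along the subtree of old clusters containing $v$ and using, at each old node met, the connectivity of its incident new clusters. With these two verifications supplied, your proof is complete.
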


\section{The multivariate copula associated to a junction tree probability distribution. The cherry-tree copulas.}
\label{sec:sec3}
\smallskip

In the following we introduce the so called cherry-tree copula, which incorporates some of the conditional independences between the variables.

We will use the following notations:

\begin{tabular}{lcl}
$F_{i,j|D}$ & -- & the conditional probability distribution function of $X_i$ and $%
X_j$ given $\mathbf{X}_D$;\\
$f_{i,j|D}$ & -- & the conditional probability density function of $X_i$ and $X_j$
given $\mathbf{X}_D$, \\
$c_{i,j|D}$ & -- & the conditional copula density corresponding to  $f_{i,j|D}$,
\end{tabular}
where $D\subset V;i,j\in V\backslash D$.

In Aas et al. (2009) the inference of pair-copula decomposition is depicted in three parts: 
\vspace{-3mm}
\begin{itemize}
\item  The selection of a specific factorization (structure);\vspace{-3mm}
\item  The choice of pair-copula types;\vspace{-3mm}
\item  The estimation of parameters of the chosen pair-copulas.
\end{itemize}
\vspace{-3mm}

This paper deals with finding a good factorization which exploits some of the conditional independences between the random variables.

There are many papers dealing with selecting specific
Regular-vines as C-vine or D-vine see for example Aas et al. (2009).

In this section we give a theorem which assures the existence of a special type of copula density, which can be assigned to a junction tree graph structure.
Let us consider a random vector $\mathbf{X}=(X_1, X_2, \ldots, X_d)^{T}$, with the set of indices $V=\{1, 2, \ldots, d\}$.  Let $(V,\Gamma,S)$  be a junction tree defined on the vertex set $V$, by the cluster set $\Gamma$, and the separator set $S$.
\begin{thm}\label{theo:theo3.1} 
The copula density function associated to a junction tree
probability distribution
\begin{equation}\label{eq:eq5}
f_{\mathbf{X}}\left(  \mathbf{x}\right)
=\dfrac{\prod\limits_{K\in\Gamma}f_{\mathbf{X}_{K}}\left(  \mathbf{x}%
_{K}\right)  }{\prod\limits_{S\in\mathcal{S}}\left[  f_{\mathbf{X}_{S}}\left(
\mathbf{x}_{S}\right)  \right]  ^{v_{S}-1}},
\hspace{5mm} {\textit is \: given \: by} \hspace{5mm}
c_{\mathbf{X}}\left(  \mathbf{u}_{V}\right)  = \dfrac{\prod\limits_{K\in\Gamma
}c_{\mathbf{X}_{K}}\left(  \mathbf{u}_{K}\right)  }{\prod\limits_{S\in
\mathcal{S}}\left[  c_{\mathbf{X}_{S}}\left(  \mathbf{u}_{S}\right)  \right]
^{v_{S}-1}}.
\end{equation}
\end{thm}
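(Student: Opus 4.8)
The plan is to pass between the joint density and its copula density via Sklar's theorem applied block by block, and then to verify that every univariate factor cancels because of the junction-tree structure. First I would record the density--copula relation in the form needed: for any subset $A\subseteq V$ the density of the sub-vector $\mathbf{X}_A$ factors as
\[ f_{\mathbf{X}_A}(\mathbf{x}_A)=c_{\mathbf{X}_A}(\mathbf{u}_A)\prod_{i\in A}f_i(x_i),\qquad u_i=F_i(x_i), \]
where $f_i$ and $F_i$ are the univariate marginal density and distribution function of $X_i$. The point to stress is that these univariate marginals are globally consistent: since each $f_{\mathbf{X}_A}$ is itself a marginal of $f_{\mathbf{X}}$, marginalising further to a single coordinate $i$ returns the same $f_i$ and $F_i$ for every block $A\ni i$. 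Hence all the vectors $\mathbf{u}_A$ are restrictions of one and the same $\mathbf{u}_V=(F_1(x_1),\dots,F_d(x_d))$, and the copula densities in the claimed formula are evaluated at consistent arguments.

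Next I would substitute this factorisation into every factor of the given junction-tree density. Writing $f_{\mathbf{X}}(\mathbf{x})=c_{\mathbf{X}}(\mathbf{u}_V)\prod_{i\in V}f_i(x_i)$ on the left and expanding each $f_{\mathbf{X}_K}$ and $f_{\mathbf{X}_S}$ on the right, the identity becomes
\[ c_{\mathbf{X}}(\mathbf{u}_V)\prod_{i\in V}f_i(x_i)=\frac{\prod_{K\in\Gamma}c_{\mathbf{X}_K}(\mathbf{u}_K)}{\prod_{S\in\mathcal{S}}[c_{\mathbf{X}_S}(\mathbf{u}_S)]^{v_S-1}}\cdot\frac{\prod_{K\in\Gamma}\prod_{i\in K}f_i(x_i)}{\prod_{S\in\mathcal{S}}\big[\prod_{i\in S}f_i(x_i)\big]^{v_S-1}}. \]
The first fraction is exactly the copula density I want, so the theorem reduces to showing that the second fraction equals $\prod_{i\in V}f_i(x_i)$ and thus cancels the univariate factor on the left.

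This cancellation is the heart of the argument. Fix a vertex $i$ and count the net exponent of $f_i(x_i)$ on the right: it is the number $a_i$ of clusters $K$ with $i\in K$, minus $\sum_{S\ni i}(v_S-1)$. By the running intersection property the clusters containing $i$ form a connected subtree of the junction tree, which has $a_i$ nodes and therefore $a_i-1$ edges; an edge carries a separator containing $i$ exactly when both of its endpoint clusters contain $i$, so the separators (counted with multiplicity) through which $i$ passes number precisely $a_i-1$. The net exponent is then $a_i-(a_i-1)=1$ for every $i\in V$, the second fraction collapses to $\prod_{i\in V}f_i(x_i)$, and the copula identity follows.

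The delicate point --- which I regard as the main obstacle --- is the bookkeeping of the separator multiplicities $v_S-1$, i.e. verifying that $\sum_{S\ni i}(v_S-1)=a_i-1$ rather than merely bounding it. A clean way to discharge it is to observe that the required marginal-product identity is nothing but formula (\ref{eq:eq1}) specialised to the distribution of mutually independent coordinates: there $f_{\mathbf{X}}=\prod_{i\in V}f_i$ while every block marginal factorises as $\prod_{i\in K}f_i$ and $\prod_{i\in S}f_i$, so the identity holds for exactly the structural reason that makes (\ref{eq:eq1}) a valid factorisation. Either route gives the cancellation and completes the proof.
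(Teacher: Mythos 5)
Your proposal is correct and follows essentially the same route as the paper's own proof: factor each block marginal $f_{\mathbf{X}_K}$ and $f_{\mathbf{X}_S}$ via Sklar's theorem, then use the running intersection property to show that the clusters containing a fixed vertex form a connected subtree, so that every univariate density $f_i(x_i)$ appears with net exponent exactly one and cancels. Your closing observation that the cancellation identity is formula (\ref{eq:eq1}) specialised to independent coordinates is a nice supplementary justification of the multiplicity bookkeeping, but the primary counting argument you give is precisely the paper's.
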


\begin{proof}

\begin{equation}\label{eq:eq6}
f_{\mathbf{X}}\left(  \mathbf{x}\right)  =\dfrac{\prod\limits_{K\in\Gamma
}f_{\mathbf{X}_{K}}\left(  \mathbf{x}_{K}\right)  }{\prod\limits_{S\in
\mathcal{S}}\left[  f_{\mathbf{X}_{S}}\left(  \mathbf{x}_{S}\right)  \right]
^{v_{S}-1}}=\dfrac{\prod\limits_{K\in\Gamma}c_{\mathbf{X}_{K}}\left(
\mathbf{u}_{K}\right)  \cdot\prod\limits_{i_{k}\in K}^{{}%
}f_{X_{i_{k}}}\left(  x_{i_{k}}\right)  }{\prod\limits_{S\in\mathcal{S}%
}\left[  c_{\mathbf{X}_{S}}\left(  \mathbf{u}_{S}\right)
\cdot\prod\limits_{i_{k}\in S}^{{}}f_{X_{i_{k}}}\left(  x_{i_{k}}\right)
\right]  ^{v_{S}-1}}.
\end{equation}
\vspace{-3mm}

The question that we have to answer is how many times appears in the
nominator respectively in the denominator the probability density function
$f_{X_{i}}\left(  x_{i}\right)$ of each $X_{i}$ random variable.

Since $\bigcup\limits_{K\in\Gamma}\mathbf{X}_{K}=X$ for each random variable
$X_{i}$ $\epsilon X$, $f_{X_{i}}\left(  x_{i}\right)$ appears at least once
in the nominator.

Now we prove that in the junction tree over $X$ the number of clusters which
contain a variable $X_{i}$ is greater with 1 than the number of separators
which contain the same variable. This is true for all $i=1,\ldots,n$.
This means 
\vspace{-1mm}
\[
\#\left\{  K\in\Gamma|X_{i}\in X_{K}\right\}  = \# \left\{S\in\mathcal{S}|X_{i}\in X_{S}\right\}+1.
\]
For a variable $X_{i}$ we denote \#$\left\{ S\in\mathcal{S}|X_{i}\in
X_{S}\right\} $ by $t$.

Case: $t=0$

The statement is a consequence of the definition of junction tree, that is
the union of all clusters is $X$, so every variable have to appear at least
in one cluster. $X_{i}$ can not appear in two clusters, because in this case
there should exist a separator which contain $X_{i}$ too, and we supposed
that there is not such a separator $(t=0)$

Case: $t>0$

If two clusters contain the variable $X_{i}$, then every cluster from the
path between the two clusters contain $X_{i}$ (running intersection
property). From this results that the clusters containing $X_{i}$ are the
nodes of a connected graph, and this graph is a tree. If this tree contain $%
t $ separator sets then it contains $t+1$ clusters. All of these separators
contain $X_{i}$, and each separator connects two clusters. So there will be $%
t+1$ clusters that contain $X_{i}$.

Applying this result in formula (\ref{eq:eq6}) after simplification we obtain

\[
f_{\mathbf{X}}\left(  \mathbf{x}\right)  =\dfrac{\prod\limits_{K\in\Gamma
}c_{\mathbf{X}_{K}}\left(  \mathbf{u}_{K}\right)  \prod\limits_{i=1}%
^{d}f_{X_{i}}\left(  x_{i}\right)  }{\prod\limits_{S\in\mathcal{S}}\left[
c_{\mathbf{X}_{S}}\left(  \mathbf{u}_{S}\right)  \right]  ^{v_{S}-1}}.
\]

Dividing both sides by $\prod\limits_{i=1}^{d}f_{X_{i}}\left(  x_{i}\right)$ we obtain (\ref{eq:eq5}).
\end{proof}

\begin{defn}\label{def:def3.2}
The copula defined by (\ref{eq:eq5}) is called junction tree copula.
\end{defn}

We saw that if the conditional independence structure underlying the random
variables makes possible the construction of a junction tree, then the
multivariate copula density associated to the joint probability distribution
can be expressed as a product and fraction of lower dimensional copula
densities.

\begin{defn}\label{def:def3.3}
The copula density function associated to a cherry tree pd is called cherry-tree copula.
\end{defn}

\section{Regular-vine structure given by a sequence of cherry trees. Cherry-vine copula.}
\label{sec:sec4}

Pair-copula construction introduced by Joe (1997) is able to encode more
types of dependences in the same time since they can be expressed as a product of different types of bivariate copulas.
The Regular-vine structures were introduced by Bedford and Cooke (2001, 2002) and described in more details by Kurowicka and Cooke (2006). 

Now we give an alternative definition for Regular-vines by using the concept of cherry tree.

\begin{defn}\label{def:def4.1}
The cherry-vine structure is defined by a sequence of 
cherry junction trees $T_1,T_2,\ldots ,T_{d-1}$ as follows

\begin{itemize}
\item  $T_1$is a regular tree on $V=\left\{ 1,\ldots ,d\right\} $, the set
of edges is $E_1=\left\{ e_i^1=\left( l_i,m_i\right)\right. $ $\left.,i=1,\ldots ,d-1,\
l_i,m_i\in V\right\}$

\item  $T_2$ is the second order cherry junction tree on $%
V=\left\{ 1,\ldots ,d\right\} $, with the set of clusters $E_2=\left\{
e_i^2,i=1,\ldots ,d-1|e_i^2=e_i^1\right\} $ , $\left| e_i^1\right| =2$

\item  $T_k$ is one of the possible $k$-th order cherry junction tree on $%
V=\left\{ 1,\ldots ,d\right\}$, with the set of clusters $E_k=\left\{
e_i^k,i=1,\ldots ,d-k+1\right\}$ , where each $e_i^k,\left| e_i^k\right| =k$
is obtained from the union of two linked clusters in the $\left( k-1\right) $%
-th order cherry junction tree $T_{k-1}$.
\end{itemize}
\end{defn}

Next we define the pair copulas assigned to the cherry-vine structure given in Definition \ref{def:def4.1}

The copula densities $c_{l_i,m_i}\left( F_{l_i}\left(
x_{_{li}}\right) ,F_{m_i}\left( x_{_{m_i}}\right) \right)$ are assigned to
the edges of the tree $T_{1}$.

The
copula densities
$
c_{a_{ij}^2,b_{ij}^2|S_{ij}^2}\left( F_{a_{ij}^2|S_{ij}^2}\left(
x_{a_{ij}^2}|\mathbf{x}_{S_{ij}^2}\right) ,F_{b_{ij}^2|S_{ij}^2}\left(
x_{b_{ij}^2}|\mathbf{x}_{S_{ij}^2}\right) \left | \;\mathbf{x}_{S_{ij}^2} \right. \right) 
$
are assigned to each pair clusters $e_i^2$ and $e_j^2$ , which are linked in
the junction tree $T_2$, where:
\[
S_{ij}^2 = e_i^2\cap e_j^2,\hspace{2mm}
a_{ij}^2 = e_i^2-S_{ij}^2, \hspace{2mm}
b_{ij}^2 = e_i^2-S_{ij}^2.
\]

The copula densities
$
c_{a_{ij}^k,b_{ij}^k|S_{ij}^k}\left( F_{a_{ij}^k|S_{ij}^k}\left(
x_{a_{ij}^k}|\mathbf{x}_{S_{ij}^k}\right) ,F_{b_{ij}^k|S_{ij}^k}\left(
x_{b_{ij}^k}|\mathbf{x}_{S_{ij}^k}\right) \left | \;\mathbf{x}_{S_{ij}^k} \right. \right) 
$
are assigned to each pair of clusters $e_i^k$ and $e_j^k$, which are linked
in the $T_k$ junction tree, where:
$S^k = e_i^k\cap e_j^k, \:
a_{ij}^k = e_i^k-S_{ij}^k, \:
b_{ij}^k = e_i^k-S_{ij}^k.$
It is easy to see that $a_{ij}^k$ and $b_{ij}^k$ contain a single element only.

\begin{thm}\label{theo:theo4.2}
The Regular-vine probability distribution associated to the cherry-vine structure given in Definition \ref{def:def4.1} can be expressed as:
\begin{equation}\label{eq:eq9}
\begin{array}{l}
f\left( x_1,\ldots ,x_d\right) =\left[ \prod\limits_{i=1}^df_i\left(
x_i\right) \right] \left[ \prod\limits_{i=1}^{d-1}c_{e_i^1}\left( F_{l_i}\left(
x_{_{li}}\right) ,F_{l_i}\left( x_{_{li}}\right) \right) \right]\\
\cdot \prod\limits_{i=2}^{d-1}\prod\limits_{e\in
E_i}c_{a_{ij}^k,b_{ij}^k|S_{ij}^k}\left( F_{a_{ij}^k|S_{ij}^k}\left(
x_{a_{ij}^k}|\mathbf{x}_{S_{ij}^k}\right) ,F_{b_{ij}^k|S_{ij}^k}\left(
x_{b_{ij}^k}|\mathbf{x}_{S_{ij}^k}\right) \left | \;\mathbf{x}_{S_{ij}^k} \right. \right).
\end{array}
\end{equation}
where $F_{a_{ij}^k|S_{ij}^k}$ are defined by Joe (1997) as
\[
F_{j |D}\left( x_{j}|\mathbf{x}_{D}\right) =\frac{\partial C_{i,j|D \backslash
\left\{ i\right\} }\left( u_i,u_j\right) }{\partial u_i}\left |_
{\hspace{-2mm}u_i=F_{i|D \backslash \left\{ i\right\} }\left( x_i|\mathbf{x}%
_{D \backslash \left\{ i\right\} }\right)  \atop %
u_j=F_{j)|D \backslash \left\{ i\right\} }\left( x_{j}|\mathbf{x}_{D \backslash \left\{ i\right\}
}\right)}\right .
\]
for $i\in D, D \subset V$.
\end{thm}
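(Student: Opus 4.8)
The plan is to prove the copula-density form of (\ref{eq:eq9}) and to restore the univariate marginals only at the end. By Theorem \ref{theo:theo3.1}, writing $f_{\mathbf{X}}(\mathbf{x})=c_{\mathbf{X}}(\mathbf{u}_V)\prod_{i=1}^{d}f_i(x_i)$ shows that the factor $\prod_{i}f_i(x_i)$ in (\ref{eq:eq9}) is accounted for automatically, so it suffices to prove that $c_{\mathbf{X}}(\mathbf{u}_V)$ equals the product of the level-$1$ bivariate copulas $\prod_{i=1}^{d-1}c_{e_i^1}$ with the conditional pair copulas $c_{a_{ij}^{k},b_{ij}^{k}\mid S_{ij}^{k}}$ taken over the edges of $T_2,\dots,T_{d-1}$. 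I would establish this by peeling off the cherry trees one level at a time, from the top tree $T_{d-1}$ down to $T_2$.

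The engine is the following exact identity. If a cluster $K=e^{k}$ of $T_k$ is, as in Definition \ref{def:def4.1}, the union of two linked clusters $e_p^{k-1},e_q^{k-1}$ of $T_{k-1}$ with separator $\sigma=e_p^{k-1}\cap e_q^{k-1}$ and single-element differences $a=K\setminus e_q^{k-1}$ and $b=K\setminus e_p^{k-1}$, then expanding $f_{a,b\mid\sigma}$ and cancelling the univariate marginals exactly as in the proof of Theorem \ref{theo:theo3.1} gives $c_{a,b\mid\sigma}=f_{a,b\mid\sigma}/(f_{a\mid\sigma}f_{b\mid\sigma})=f_{\mathbf{X}_K}f_{\mathbf{X}_\sigma}/(f_{\mathbf{X}_{e_p^{k-1}}}f_{\mathbf{X}_{e_q^{k-1}}})$, whence the copula recursion
\[
c_{\mathbf{X}_K}(\mathbf{u}_K)=\frac{c_{\mathbf{X}_{e_p^{k-1}}}c_{\mathbf{X}_{e_q^{k-1}}}}{c_{\mathbf{X}_\sigma}}\;c_{a,b\mid\sigma}.
\]
This is an identity requiring no conditional independence: it trades one $k$-dimensional cluster copula for the two $(k-1)$-dimensional parent cluster copulas, their shared separator copula, and one new conditional pair copula conditioning on the $(k-2)$-element set $\sigma$, i.e.\ a factor sitting on an edge of $T_{k-1}$. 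Its two arguments are $F_{a\mid\sigma}$ and $F_{b\mid\sigma}$, which I would identify with Joe's recursively defined conditional c.d.f.'s $F_{j\mid D}=\partial C_{i,j\mid D\setminus\{i\}}/\partial u_i$ quoted in the statement, so that the released factor is precisely the labelled $c_{a_{ij}^{k},b_{ij}^{k}\mid S_{ij}^{k}}$.

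I would then run a descending induction on the level $m$, maintaining the invariant that $c_{\mathbf{X}}$ equals the junction tree copula of $T_m$, namely $c_{\mathbf{X}}^{(m)}=\prod_{K\in E_m}c_{\mathbf{X}_K}/\prod_{S\in\mathcal{S}_m}c_{\mathbf{X}_S}^{\,v_S-1}$, multiplied by all the pair copulas already released on the edges of $T_m,\dots,T_{d-1}$. Applying the recursion to $c_{\mathbf{X}}=c_{\mathbf{X}_V}$, with $V=A\cup B$ the union of the two linked clusters $A,B$ of $T_{d-1}$, establishes the invariant at the top level $m=d-1$. For the step from $m$ to $m-1$ I substitute the recursion into each cluster copula $c_{\mathbf{X}_K}$, $K\in E_m$; since every cluster of $T_m$ is the union of two linked clusters of $T_{m-1}$, i.e.\ corresponds to exactly one edge of $T_{m-1}$, the newly released factors form precisely the level-$(m-1)$ product, and what remains must be shown to reassemble into the junction tree copula $c_{\mathbf{X}}^{(m-1)}$ of $T_{m-1}$. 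The induction terminates at $m=2$, where $\mathcal{S}_2$ consists of single vertices so that $c_{\mathbf{X}_S}\equiv1$ and $c_{\mathbf{X}}^{(2)}=\prod_{i=1}^{d-1}c_{e_i^1}$, yielding exactly the stated product.

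The step I expect to be the main obstacle is this reassembly, a purely structural multiplicity count. After substitution the leftover numerator contains each $(m-1)$-cluster copula $c_{\mathbf{X}_{K'}}$ with multiplicity equal to the degree of $K'$ in the tree $T_{m-1}$, while the leftover denominator contains the $T_m$-separator copulas together with the $T_{m-1}$-separator copulas $c_{\mathbf{X}_\sigma}$. The key combinatorial fact, which follows from the running intersection property together with the nesting of Definition \ref{def:def4.1}, is that each separator of $T_m$ is exactly one of the internal clusters of $T_{m-1}$ and that its multiplicity $v_S$ equals the degree of that cluster in $T_{m-1}$; consequently the exponents collapse so that each $c_{\mathbf{X}_{K'}}$ survives to the first power, while the junction tree structure recasts the remaining $c_{\mathbf{X}_\sigma}$ as precisely the separators of $T_{m-1}$ with the correct exponents $v_{S'}-1$. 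Verifying this degree-versus-multiplicity bookkeeping rigorously, rather than only on an example, is the crux of the argument; once it is in place the telescoping closes and (\ref{eq:eq9}) follows.
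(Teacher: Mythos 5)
You should know at the outset that the paper gives no proof of Theorem \ref{theo:theo4.2} at all --- it is stated bare, as a transplantation of the Bedford--Cooke R-vine density decomposition to Definition \ref{def:def4.1} --- so there is no ``paper's proof'' to compare yours against; it has to stand on its own. Most of it does: the recursion $c_{\mathbf{X}_K}=\frac{c_{\mathbf{X}_A}c_{\mathbf{X}_B}}{c_{\mathbf{X}_\sigma}}\,c_{a,b|\sigma}$ is a correct pointwise identity (conditional Sklar's theorem plus cancellation of the univariate marginals, in the same spirit as the proof of Theorem \ref{theo:theo3.1}); the base case on the two clusters of $T_{d-1}$ is correct; the assignment of a realizing edge of $T_{m-1}$ to each cluster of $T_m$ is indeed a bijection (each cluster has at least one realizing edge, one edge can realize only one cluster, and both families have $d-m+1$ members); the factors released level by level are exactly the pair copulas appearing in (\ref{eq:eq9}); and the termination at $m=2$, where the separator copulas are identically $1$, is correct.

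The genuine gap is precisely the step you flag and then leave unproved: that every separator of $T_m$ is a cluster of $T_{m-1}$, sitting on exactly $\deg_{T_{m-1}}(S)-1$ edges of $T_m$. You assert this ``follows from the running intersection property together with the nesting of Definition \ref{def:def4.1}'', but no one-line deduction exists, and the claim is not bookkeeping --- it is the entire mathematical content of the theorem, namely the regular-vine proximity condition rewritten in junction-tree language, i.e.\ exactly what separates the structures of Definition \ref{def:def4.1} from an arbitrary nested sequence of trees. Nor is it locally obvious: if $K_1=A\cup B$ and $K_2=C\cup D$ are linked in $T_m$, nothing in the definition prevents the parent edges $(A,B)$ and $(C,D)$ from being vertex-disjoint in $T_{m-1}$, in which case $K_1\cap K_2$ need not equal any of $A,B,C,D$. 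The claim is true, but its proof needs the junction-tree property of \emph{both} trees: writing $K_1=S\cup\{p\}$, $K_2=S\cup\{q\}$ and supposing $S$ is no cluster of $T_{m-1}$, each of $A,B,C,D$ must have the form $(S\setminus\{s\})\cup\{p\}$ or $(S\setminus\{s\})\cup\{q\}$ for suitable $s\in S$; the running intersection property of $T_{m-1}$ then forces the clusters on the path of $T_{m-1}$ joining the two parent edges all to contain a common $(m-2)$-subset of $S$, so each adds a single vertex $x_i\notin S$; finally, deleting the edge $(K_1,K_2)$ from $T_m$ and asking on which side the clusters of $T_m$ generated by that path fall traps some $x_i$ in $K_1\cap K_2=S$, a contradiction. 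The multiplicity statement then follows by a separate counting argument: the clusters of $T_m$ containing a fixed cluster of $T_{m-1}$ form a subtree of $T_m$ of size at least its degree, and the identity $\sum_{A'}(\deg_{T_{m-1}}(A')-1)=d-m$, the number of edges of $T_m$, forces equality throughout. Until an argument of this kind is supplied, the reassembly into the junction tree copula of $T_{m-1}$ --- and with it your whole induction --- is unestablished; as written, your proposal is a correct strategy whose crux is missing, not a proof.
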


\begin{defn}\label{def:def4.3}
The copula associated to the joint density function $f(x_1,...x_d)$, given in formula (\ref{eq:eq9}) is called
cherry-vine copula.
\end{defn}

We mention here that the so called cherry-vine copula is a Regular-vine copula which does not use any of the conditional independences.

\section{Truncated R-vine as a special case of cherry-tree copula}
\label{sec:sec5}
\smallskip

As the number of variables grows, the number of conditional pair copulas grows rapidly. For example in (Dissman et al. (2013) for 16 variables the number of pair copulas involved, which have to be modeled and fitted is $120= 15+14+\cdots +2+1$. To keep such structure tractable for inference and model selection, the simplifying assumption that copulas of conditional distributions do not depend on the variables which they are conditioned on is popular. Although this assumption leads in many cases to misspecifications as it is pointed out in Acar et al. (2012) and in Hobaek Haff and Segers (2010). In Hobaek Haff et al. (2010) are presented classes of distributions where simplification is applicable.
An idea to overcome the fitting of a large number of pair copulas with large conditioning set is to exploit the conditional independences between the random variables. This idea was already discussed for Gaussian copulas in Kurovicka and Cooke (2006), based on the idea inspired by Whittaker (1990). Our approach presented here is more general.

In the following remark Aas et al (2009) give the relation between conditional independences and conditional pair-copulas.

\begin{rem}\label{rem:rem5.1}
$X_i$ and $X_j$ are conditional independent given the set of variables $%
\mathbf{X}_A, A \subset V\backslash \left\{ i,j\right\} $ if and only if
\[
c_{ij|A}\left( F_{i|A}\left( x_i|\mathbf{x}_A\right) ,F_{j|A}\left( x_j|%
\mathbf{x}_A\right) \left | \;\mathbf{x}_A \right. \right) =1. 
\]
\end{rem}

The following definition of {\it truncated vine at level k} is given in
Brechmann et al. (2012).

\begin{defn}\label{def:def5.3}
A \textit{pairwisely truncated R-vine at level k} (or truncated R-vine at level $k$) is a special R-vine copula with the property that all pair-copulas
with conditioning set equal to, or larger than $k$, are set to bivariate independence copulas.
\end{defn}

In their approach Brechmann et al. (2012), construct the truncated vines by choosing in the first $k$-trees the strongest Kendall-tau between the variables. In the last trees the pair copulas were set to one. We claim that the strong dependences in the lower trees do not imply independences in the last trees in general. This is easy to understand because of the great number of possibilities to build the last trees, starting from the same first trees. 

Another approach, which is much closer to ours, is given in Kurowicka (2011). Her idea was building trees with lowest dependence (conditional independences) in the top trees, starting with the last tree (node). Her method uses partial correlations which in case of Gaussian copula are theoretical well grounded.

Now we prove that a general $k$-width junction tree copula (see Definition \ref{def:def3.2}) can be expressed as a $k$-th order cherry-tree copula.

Let us suppose that we have a $d$-dimensional random vector $\left(
X_1,\ldots ,X_d\right) ^T$ with $k$-width junction tree structure (which
means there are some conditional independences between the random variables
contained by the random vector.) This means that $P\left( \mathbf{X}\right) $
can be written as a junction tree pd of k-with. By theorem 3.1 we can
express the copula density of $P\left( \mathbf{X}\right) $ as a $k$-with
junction tree copula. As the copula density function is a special case of
probability density function our theorems reminded in the preliminaries can
be applied to them.

Since the copula density $c\left( \mathbf{U}_V\right) $ of $P\left( \mathbf{X%
}\right) $ can be expressed as a $k$-with junction tree copula we have 
$
KL\left( c_{k-JT}\left( \mathbf{U}_V\right) ;c\left( \mathbf{U}_V\right)
\right) =0.
$

According to Theorem \ref{theo:theo2.4} we can find a $k$--th order cherry-tree copula
density such that
$
KL\left( c_{k-ChT}\left( \mathbf{U}_V\right) ;c\left( \mathbf{U}_V\right)
\right) \leq KL\left( c_{k-JT}\left( \mathbf{U}_V\right) ;c\left( \mathbf{U}%
_V\right) \right) =0, 
$
where $c_{k-ChT}\left( \mathbf{U}_V\right) $ denotes the $k$-th order cherry
tree copula density.

Since the Kulback-Leibler divergence (Cover and Thomas (1991)) is
greater than or equal $0$ it follows that 
$
KL\left( c_{k-ChT}\left( \mathbf{U}_V\right) ;c\left( \mathbf{U}_V\right)
\right) =0\text{.} 
$

From this it follows that $c\left( \mathbf{U}_V\right) $ can be expressed as a $k$-th
order cherry-tree copula. This is why the $k$-th order cherry-tree copulas are useful. 

\begin{thm}\label{theo:theo5.5}
Every $k$-th order cherry tree copula, associated to a cherry tree pd,
can be expressed as a $(k+1)$-th order cherry tree copula.

\end{thm}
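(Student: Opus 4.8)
The plan is to reproduce, one order higher, the Kullback--Leibler argument that was just carried out for the passage from a $k$-width junction tree copula to a $k$-th order cherry-tree copula, replacing the use of Theorem \ref{theo:theo2.4} by Theorem \ref{theo:theo2.6}. First I would make the hypothesis concrete: by Definition \ref{def:def3.3} the given copula density $c_{k-ChT}$ is the copula density associated to some cherry-tree pd $P(\mathbf{X})$. Writing $c(\mathbf{U}_V)$ for the copula density of $P(\mathbf{X})$, we have $c = c_{k-ChT}$ and therefore
\[
KL\left( c_{k-ChT}\left( \mathbf{U}_V\right) ; c\left( \mathbf{U}_V\right) \right) = 0 .
\]

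Next I would observe that, because $P(\mathbf{X})$ already possesses a $k$-th order cherry-tree structure, $P(\mathbf{X})$ is its own best approximating $k$-th order cherry-tree pd, the minimum divergence $0$ being attained by $P$ itself. Transferred to copula densities, this says that the best approximating $k$-th order cherry-tree copula for $c$ is $c$, with zero divergence. This is precisely the initialization required by Theorem \ref{theo:theo2.6}.

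I would then apply Theorem \ref{theo:theo2.6}. Since a copula density is a special probability density function, the transformation results recalled in the Preliminaries apply to copula densities verbatim, exactly as already exploited in the paragraph preceding this statement. Hence the constructive algorithm, started from the best approximating $k$-th order cherry-tree copula, produces a $(k+1)$-th order cherry-tree copula $c_{(k+1)-ChT}$ satisfying
\[
KL\left( c_{(k+1)-ChT}\left( \mathbf{U}_V\right) ; c\left( \mathbf{U}_V\right) \right) \le KL\left( c_{k-ChT}\left( \mathbf{U}_V\right) ; c\left( \mathbf{U}_V\right) \right) = 0 .
\]
Because the Kullback--Leibler divergence is nonnegative, this forces $KL\left( c_{(k+1)-ChT}; c\right) = 0$, so $c_{(k+1)-ChT} = c = c_{k-ChT}$ almost everywhere, and the $k$-th order cherry-tree copula is thereby expressed as a $(k+1)$-th order cherry-tree copula.

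The step I expect to require the most care is not the chain of inequalities but the two structural justifications underpinning it: that the given cherry-tree pd is genuinely its own best $k$-th order approximation, so that the divergence we start from is honestly $0$, and that the output of the constructive algorithm is a legitimate cherry tree in the sense of Definition \ref{def:def2.2}, with all clusters of size $k+1$ and all separators of size $k$. Both are guaranteed by the cited results, so I anticipate no essential obstacle beyond stating them cleanly.
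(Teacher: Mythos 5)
Your proposal is correct and follows essentially the same route as the paper: apply Theorem \ref{theo:theo2.6} to the copula density (viewed as a probability density), combine the resulting inequality with the nonnegativity of the Kullback--Leibler divergence, and conclude that the divergence of the constructed $(k+1)$-th order cherry-tree copula from $c(\mathbf{U}_V)$ is zero. In fact you are slightly more careful than the paper, which in its displayed inequality writes $KL\left(c_{k-JT};c\right)=0$ (carrying over notation from the preceding junction-tree discussion) where your $KL\left(c_{k-ChT};c\right)=0$, justified by noting that $c$ is itself the given $k$-th order cherry-tree copula and hence its own best approximation, is the cleaner starting point.
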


\begin{proof}
According to Theorem \ref{theo:theo2.6} in Section \ref{sec:sec2} starting from
the $k$-th order cherry- tree copula we can obtain a $\left( k+1\right) $-th
order cherry- tree copula, which gives at least as good approximation to $%
c\left( \mathbf{U}_V\right) $, as the $k$-th order did.

This is that
$
KL\left( c_{\left( k+1\right) -ChT}\left( \mathbf{U}_V\right) ;c\left( 
\mathbf{U}_V\right) \right) \leq KL\left( c_{k-JT}\left( \mathbf{U}_V\right)
;c\left( \mathbf{U}_V\right) \right) =0, 
$
and taking again into account that KL divergence is greater than or equal $0$ it
follows that 
$
KL\left( c_{\left( k+1\right) -ChT}\left( \mathbf{U}_V\right) ;c\left( 
\mathbf{U}_V\right) \right) =0.
$
\end{proof}

This theorem implies the following result.

\begin{thm}\label{theo:theo5.7}
The truncated vine at level $k$ is a $k$-th order
cherry-tree copula.
\end{thm}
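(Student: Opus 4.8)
The plan is to turn the truncation hypothesis into a statement about conditional independences, substitute it into the cherry-vine density~(\ref{eq:eq9}) of Theorem~\ref{theo:theo4.2}, and recognise the surviving factors as the junction-tree copula~(\ref{eq:eq5}) of a cherry tree of order~$k$.

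First I would apply Remark~\ref{rem:rem5.1}: by Definition~\ref{def:def5.3} each truncated pair-copula is the bivariate independence copula and hence equals $1$, which by Remark~\ref{rem:rem5.1} says exactly that the two variables it indexes are conditionally independent given its conditioning set. In the cherry-vine density~(\ref{eq:eq9}) these are precisely the factors attached to the edges of the truncated top trees, so they drop out and the density reduces to the univariate marginals times the pair-copulas carried by the retained lower trees $T_1,\ldots,T_{k-1}$.

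The heart of the proof is to identify this reduced product with the cherry-tree copula~(\ref{eq:eq5}) of the $k$-th order cherry tree $T_k$. I would argue by induction on the tree order, the engine being a single local identity: if a cluster $K=\{a\}\cup S\cup\{b\}$ is the union of two linked clusters $\{a\}\cup S$ and $\{b\}\cup S$ sharing the separator $S$, then the two-cluster instance of~(\ref{eq:eq5}) combined with the standard conditional-copula factorisation gives $c_{\mathbf{X}_K}=\dfrac{c_{\mathbf{X}_{\{a\}\cup S}}\,c_{\mathbf{X}_{\{b\}\cup S}}}{c_{\mathbf{X}_S}}\cdot c_{a,b\mid S}$. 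By Definition~\ref{def:def4.1} every cluster of order $m$ arises exactly as such a union of two linked clusters of order $m-1$, and the conditional pair-copula supplied by tree $T_{m-1}$ is precisely the correction factor $c_{a,b\mid S}$ above. Multiplying these local identities over the tree therefore telescopes: the order-$(m-1)$ cherry-tree copula times the pair-copulas of $T_{m-1}$ collapses to the order-$m$ cherry-tree copula, and iterating from the first tree (whose edge copulas give the order-$2$ factor) up to $T_{k-1}$ lands on the clique/separator ratio $\prod_{K\in\Gamma}c_{\mathbf{X}_{K}}\big/\prod_{S\in\mathcal{S}}c_{\mathbf{X}_{S}}^{\,v_{S}-1}$. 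By Theorem~\ref{theo:theo3.1} and Definition~\ref{def:def3.3} this is the cherry-tree copula of the $k$-th order cherry tree $T_k$, which is the claim.

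The step I expect to be the genuine obstacle is the bookkeeping in the telescoping, not the local identity itself. One must verify that when the factors $c_{\mathbf{X}_S}$ and $c_{\mathbf{X}_K}$ are multiplied over all edges of the tree, each separator copula is raised to exactly the exponent $v_S-1$ demanded by~(\ref{eq:eq5})---that is, that the number of edges of the cherry tree carrying a given separator $S$ equals $v_S-1$, the number of clusters containing $S$ minus one. This is the usual running-intersection accounting for junction trees, and once it is settled the conditional pair-copulas of the retained trees assemble without remainder into the clique/separator ratio, so the reduction is \emph{exact} and no Kullback--Leibler estimate is needed here. As an alternative finish one could instead invoke the result recorded just before Theorem~\ref{theo:theo5.5}: the reduced density is a $k$-width junction-tree copula, and every such copula can be written as a $k$-th order cherry-tree copula.
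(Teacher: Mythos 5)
Your route is genuinely different from the paper's. The paper offers no direct derivation of Theorem~\ref{theo:theo5.7} at all: it presents it as an immediate consequence of Theorem~\ref{theo:theo5.5} and of the discussion preceding it, whose engine is the approximation machinery imported from Sz\'antai and Kov\'acs (2008) (Theorems~\ref{theo:theo2.4} and~\ref{theo:theo2.6}) --- one argues that the truncated structure is a junction-tree copula, that a transformed cherry-tree copula approximates it at least as well, and that a vanishing Kullback--Leibler divergence forces equality. Your telescoping argument replaces that detour by an exact algebraic identity: the local factorisation $c_{\mathbf{X}_K}=c_{a,b\mid S}\,c_{\mathbf{X}_{\{a\}\cup S}}\,c_{\mathbf{X}_{\{b\}\cup S}}/c_{\mathbf{X}_S}$, multiplied over the links of each tree, with the running-intersection count (the number of links labelled $S$ equals $v_S-1$) doing the bookkeeping. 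That count is correct, so the telescoping is sound (modulo the standard vine consistency point that the conditional distribution functions fed into the copulas of tree $T_m$ are those of the density already built from the lower trees); what it buys is self-containedness and an explicit, exact identification of the cherry-tree copula, rather than an existence statement squeezed out of $KL=0$. Your ``alternative finish'' is, in effect, the paper's own argument.

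There is, however, one concrete discrepancy you must resolve: an off-by-one in identifying the retained trees. Definition~\ref{def:def5.3} truncates the pair copulas whose conditioning set has size at least $k$. In the cherry-vine of Definition~\ref{def:def4.1}, the copulas attached to the links of $T_j$ have conditioning sets of size $j-1$; hence truncation at level $k$ kills the links of $T_{k+1},\ldots,T_{d-1}$ and retains $T_1,\ldots,T_k$, not $T_1,\ldots,T_{k-1}$ as you assert. Applying your (correct) telescoping to what is actually retained, the product lands on the cherry-tree copula over the clusters of $T_{k+1}$, i.e.\ a $(k+1)$-th order cherry-tree copula, not a $k$-th order one. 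So either you prove a statement about $(k+1)$-th order copulas, or you must read ``truncated at level $k$'' as meaning that the links of $T_k,T_{k+1},\ldots$ are trivial. This tension is inherited from the paper itself: Theorem~\ref{theo:theo5.7} as stated is consistent only with the second reading, while Definition~\ref{def:def5.3} (quoted from Brechmann et al.) forces the first, under which a level-$1$ truncation is a Markov tree, that is a second-order, not a first-order, cherry-tree copula. Your write-up should state explicitly which convention it uses, because this index correspondence is essentially the entire content of the theorem.
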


This theorem suggests that, for searching for the best fitting truncated vine at level is useful to search for the best fitting cherry tree copula. 

An important property of the cherry tree copula structures is that these are capable to reduce massively the number of conditional copulas, and also the dimension of the conditioning set.

\smallskip
{\footnotesize
\hspace*{0.5cm}

\begin{minipage}[t]{8cm}$$\begin{array}{l}
\mbox{Edith Kov\'{a}cs -- Department of Mathematics and Informatics,}\\
\mbox{Budapest College of Management},\\
 \mbox{Vill\'{a}nyi \'{u}t 11-13., Budapest, 1114 HUNGARY}\\
\mbox{E-mail: kovacs.edith@avf.hu}\end{array}$$
\end{minipage}}

\begin{thebibliography}{99}

\bibitem{} K. Aas, C. Czado, A. Frigessi, and H. Bakken, Pair-copula constructions of multiple dependence, Insur. Math. Econ., 44, 182--198, (2009)

\bibitem{} E.F. Acar, C. Genest and J. Neslehova, Beyond simplified pair-copula constructions, Journal of Multivariate Analysis, 110, 74--90, (2012)

\bibitem{} T. Bedford and R. Cooke, Probability density decomposition for conditionally dependent random variables modeled by vines, Ann. Math. Artif. Intell., 32, 245--268, (2001)

\bibitem{} T. Bedford and R. Cooke, Vines -- a new graphical model for dependent random variables, Ann. Stat., 30(4), 1021--1068, (2002)

\bibitem{} E.C. Brechmann, C. Czado and K. Aas, Truncated regular vines in high dimensions
with applications to financial data, Canadian Journal of Statistics, 40(1), 68--85, (2012)

\bibitem{} J. Buksz\'{a}r and A. Pr\'{e}kopa, Probability Bounds with Cherry Trees, Mathematics of Operational Research, 26, 174--192, (2001)

\bibitem{} J. Buksz\'{a}r and T. Sz\'{a}ntai, Probability Bounds Given by Hypercherry Trees, Optimization Methods and Software, 17, 409--422, (2002)

\bibitem{} C. Czado, Pair-copula constructions of multivariate copulas, In: P. Jaworski, F. Durante, W. H\"{a}rdle and T. Rychlik (Eds.), Copula Theory and Its Applications, Berlin, Springer, (2010)

\bibitem{} J. Dissman, E.C. Brechmann, C. Czado and D. Kurowicka, Selecting and estimating regular vine copulae and application to financial returns,  Computational Statistics and Data Analysis, 59, 52--69, (2013)

\bibitem{} I. Hobaek Haff, K. Aas and A. Frigessi, On the simplified pair-copula construction -- simply useful or too simplistic? Journal of Multivariate Analysis, 101(5), 1296--1310, (2010)

\bibitem{} I. Hobaek Haff and J. Segers, Nonparametric estimation of pair-copula constructions with the empirical pair-copula, http://arxiv.org/abs/1201.5133, (2010)

\bibitem{} H. Joe, Multivariate Models and Dependence Concepts, Chapman \& Hall, London, (1997)

\bibitem{} D. Kurowicka and R. M. Cooke, Uncertainty Analysis with High Dimensional Dependence Modelling, Chichester, John Wiley, (2006)

\bibitem{} D. Kurowicka, Optimal truncation of vines, in: D. Kurowicka and H. Joe (eds) Dependence-Modeling -- Handbook on Vine Copulas, Word Scientific Publishing, Singapore, (2011)

\bibitem{} S.L. Lauritzen and D.J. Spiegelhalter, Local Computations with Probabilites
on Graphical Structures and their Application to Expert Systems, J.R.
Statist. Soc. B, 50, 157--227, (1988)

\bibitem{} Sz\'{a}ntai, T. and E. Kov\'{a}cs, 
Hypergraphs as a mean of discovering the dependence structure of a discrete multivariate probability
distribution, \textit{Proc. Conference APplied mathematical programming and 
MODelling (APMOD), 2008}, Bratislava, 27-31 May 2008,
Annals of Operations Research, 193(1), 71--90, (2012)

\bibitem{} J. Whittaker, Graphical Models in Applied Multivariate Statistics, John Wiley \& Sons, (1990)

\end{thebibliography}
\end{document}